\newcommand{\e}{\epsilon}
\makeatletter \@addtoreset{equation}{section} \makeatother
\newtheorem{theorem}{Theorem}[section]
\newcommand{\ra}{\rightarrow}
\begin{document}
\title[Multiscale expansion of KdV]{
Numerical study of a multiscale expansion of the Korteweg de Vries 
equation and Painlev\'e-II equation}
\begin{abstract}
    The Cauchy problem for the Korteweg de Vries (KdV) equation with 
     small dispersion of order $\e^2$, $\e\ll 1$, 
    is characterized by the appearance of a zone of rapid modulated oscillations.
    These oscillations are approximately described by the elliptic solution of KdV where the amplitude,
     wave-number and frequency are not constant but evolve according 
     to the Whitham equations. Whereas the difference between the KdV 
     and the asymptotic solution decreases as $\epsilon$ in the 
     interior of the Whitham oscillatory zone, it is known to be only of order 
     $\epsilon^{1/3}$ near the leading edge of this 
     zone. To obtain a more accurate description 
     near the leading edge of the oscillatory zone we present a multiscale expansion of the 
     solution of KdV  in terms of the Hastings-McLeod solution of the 
     Painlev\'e-II equation. We show numerically that the resulting 
     multiscale  solution approximates the KdV solution, in the small dispersion limit,
 to the order           
     $\epsilon^{2/3}$. 
\end{abstract}
\author[T. Grava]{T. Grava}
 \address{SISSA, via Beirut 2-4, 34014 Trieste, Italy} 
 \email{grava@fm.sissa.it}
\author[C. Klein]{C. Klein}
 \address{Max Planck Institute for Mathematics in the Sciences} 
 \email{klein@mis.mpg.de}
 \thanks{We thank B.~Dubrovin and J.~Frauendiener for helpful 
discussions and hints. We acknowledge support by the MISGAM program 
of the European Science Foundation. TG  acknowledges support by the RTN ENIGMA and  Italian 
COFIN 2004 ``Geometric methods in the theory of nonlinear waves and their applications''. The authors wish to thank the referees for the improvements suggested to the manuscript.}

\maketitle
\section{Introduction}
The mathematically rigorous study of the small dispersion limit of the 
Korteweg de Vries (KdV) equation 
\begin{equation}
    u_{t}+6uu_{x}+\epsilon^{2}u_{xxx}=0, \quad  \epsilon\ll 1,
    \label{p22}
\end{equation}
with smooth initial data $u_0(x)$  was initiated in the works 
of Lax-Levermore \cite{LL}, which stimulated intense research
both numerically and analytically on the problem.
\begin{figure}[!htb]
 \centering\epsfig{figure=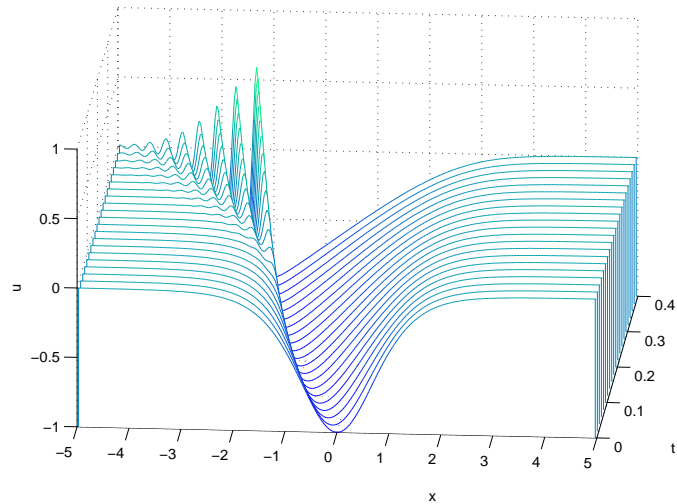, width=9.5cm}
 \caption{Numerical solution of the KdV equation for 
 the initial data 
 $u_0(x)=-\mbox{sech}^2x$  and $\epsilon=0.1$.}
 \label{fig1}
 \end{figure}
 The solution of the
Cauchy problem of the KdV equation in the small dispersion 
limit is characterized by the appearance of a zone of rapid 
oscillations of frequency of order $1/\epsilon$, see for instance Fig.~\ref{fig1}.

These oscillations are formed in the strong nonlinear regime and they
 have been analytically described in terms of 
elliptic functions, and in the general case in terms of 
theta functions  in \cite{V2},
\cite{DVZ}; the evolution in time of the oscillations was studied in \cite{FRT1}.
These results give a good asymptotic description of the oscillations 
only near the center
of the oscillatory zone (see Fig.~\ref{fig4}). 
\begin{figure}[!htb]
\centering
\epsfig{figure=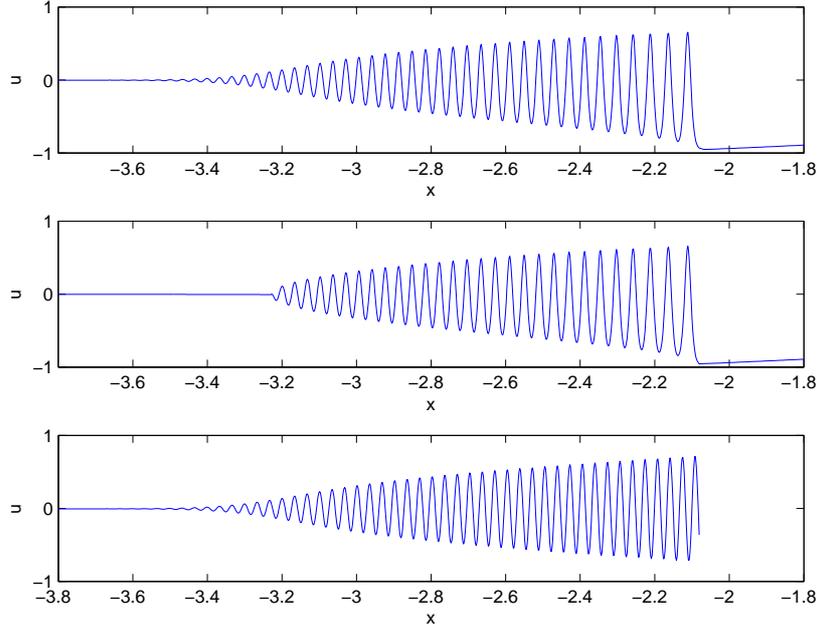, width=1.00\textwidth}
\caption{We plot for $u_{0}=-\mbox{sech}^{2}x$, 
$t=0.4$ and  $\epsilon=10^{-2}$  from top to bottom:  
1) the numerical solution of KdV;  2)  the asymptotic formula (\ref{elliptic}) in terms of elliptic functions  and the Hopf solution;  3)  the multiscale solution where the envelope of the oscillations is given by a solution to the Painlev\'e-II equation.}
\label{fig4}
\end{figure}
\begin{figure}[!htb]
\centering
\mbox{\subfigure[]
{\epsfig{figure=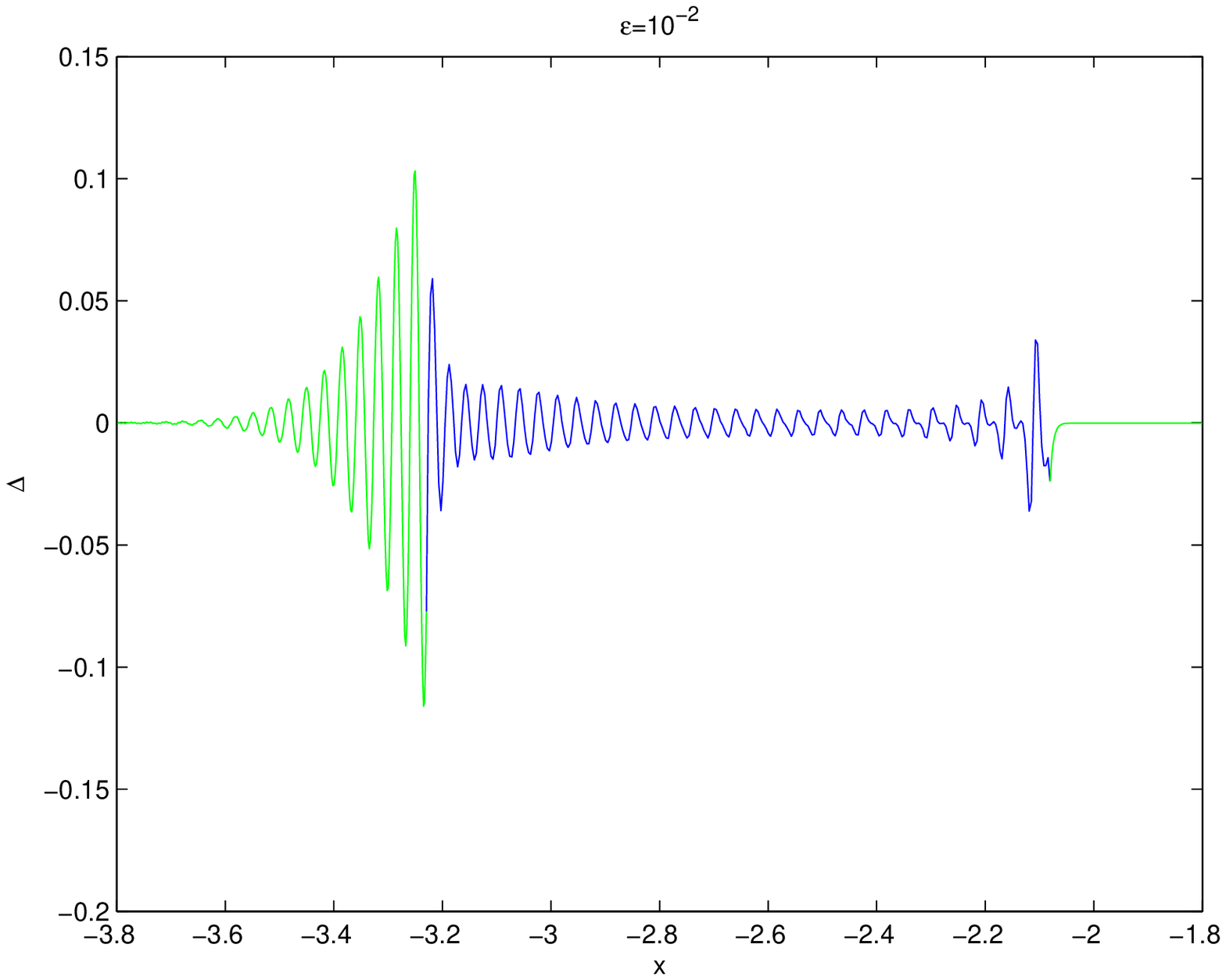, width=.55\textwidth}}
\subfigure[]
{\epsfig{figure=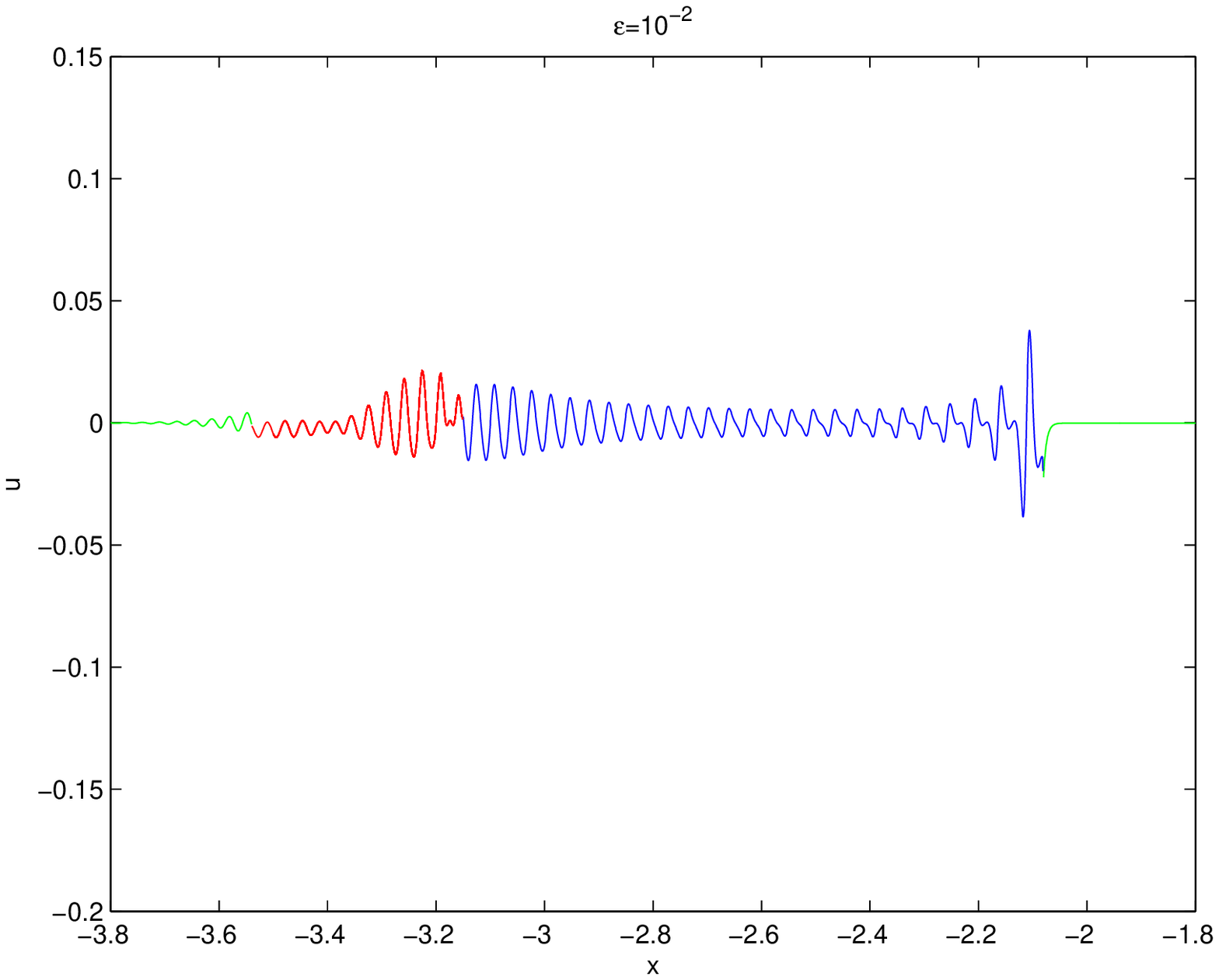, width=.55\textwidth}}}\newline
\caption{In (a) the difference between the upper two plots of 
Fig.~\ref{fig4} is shown. The Whitham zone  is 
shown in blue. In (b) one can see the same situation as in (a)  except for the region close to the 
leading edge of the Whitham zone where the difference between the KdV 
and the multiscale solution is shown in red.}
\label{fig5}
\end{figure}
In \cite{GK}  we have studied numerically the small
 dispersion limit of the KdV  equation for the concrete example of initial data of the form 
\begin{equation}
    u_{0}(x) = -\mbox{sech}^{2}x
    \label{initial}.
\end{equation}
We have compared the asymptotic description given in the works 
\cite{LL,V2,DVZ}
with the numerical  KdV solution. In \cite{GK} we have shown numerically
 that  the difference
 between the KdV solution and  the elliptic asymptotic solution at the 
center of the oscillatory zone scales like $\epsilon$ while this fails to be true at the boundary of the oscillatory zone. This fact was also observed for the Benjamin-Ono in \cite{JMS}. 
In particular at the left boundary, 
where the amplitude of the  oscillations tend to zero, the difference between the   KdV solution and  
the elliptic asymptotic solution  scales like $\epsilon^{\frac{1}{3}}$.
In this manuscript we show that the Painlev\'e-II equation describes 
the envelope of the oscillations at the leading edge where the 
oscillations tend to zero.
More precisely
\[
 u(x,t,\epsilon) = u(t)
    +\epsilon^{1/3}a\cos\left(\frac{\psi}{\epsilon}\right)+ O(\epsilon^{2/3})
\]
where $u(t)=u(x^-(t),t)$ solve the Hopf equation at the boundary $x^-(t)$ of the oscillatory zone, $\psi$ is a phase determined in (\ref{psi})
and   $a$ is, up to shift and rescalings, the Hastings-McLeod solution of  
Painlev\`e-II equation
\[
 A_{zz}=zA+2A^{3}
\]
with boundary conditions
\[
\lim_{z\rightarrow -\infty}A(z)=\sqrt{-z/2},\quad 
\lim_{z\rightarrow+\infty}A(z)=\mbox{Ai}(z)
\]
where $\mbox{Ai}(z)$ is the Airy function.

Painlev\'e equations appear in many branches of mathematics (for a review see \cite{Cl}). 
For example   in the study of 
self-similar solutions of integrable equations, in the 
study  of the Hele-Shaw flow near singularities  \cite{fokas}, 
or in double-scaling limits  in random matrix models (see e.g. \cite{BZ}, \cite{FIK},\cite{BI},\cite{CKV}). 
In this work, following \cite{KS},
 we perform  a double-scaling limit of the KdV equation
to derive the asymptotic description  of the leading  edge 
oscillations which are formed in the KdV small dispersion limit. We show that 
the envelope  of the oscillations is determined by the Hastings-McLeod  \cite{HM} 
solution of the Painlev\'e-II equation. 
Then we compare numerically for the initial data $u_0(x)=-\mbox{sech}^2x$ 
at the leading edge of the oscillatory front,
the KdV solution with the derived multiscale solution and show that 
the difference between the two solutions scales like 
$\epsilon^{\frac{2}{3}}$.  We identify a neighborhood
of the leading edge of the Whitham zone where the multiscale 
solution gives a better asymptotic description than the  asymptotic 
solution based on the elliptic and the Hopf solution. This allows to 
patch different asymptotic descriptions to provide a more 
satisfactory treatment of the small  dispersion limit of KdV as shown in 
Fig.~\ref{fig5}. \\
 Our analytical investigation of the multiscale expansion of the KdV solution
 requires the following assumptions on the initial data:
\begin{itemize}
\item $u_0(x)$ is negative with a single minimum; we chose the minimum to be at $x=0$ and normalized to -1, namely $u_0(0)=-1$;
\item the function $f_-(u)$ which is the inverse of the monotone decreasing part of the initial data $u_0(x)$ is such that $f'''_-(u)<0$;
\item $\int_{-\infty}^{+\infty}u_0(x)(1+x^2)dx<\infty$.
\end{itemize}
The last condition guarantees that   the solution of the  Cauchy problem for 
 KdV exists for all times $t>0$.

This manuscript is organized as follows. In section 2 we review 
the small dispersion limit of the KdV equation and study the small amplitude limit of the oscillations.
In section 3 we perform a multiscale expansion of the KdV 
solution when the oscillations tend to zero, and we show that the 
envelope of the oscillations is given by a  solution of the 
Painlev\'e-II equation.
In section 4 we numerically compare the KdV solution with the 
multiscale solution obtained in section 2. We show that the difference between 
the two solutions scales as $\epsilon^{\frac{2}{3}}$ which is in 
accordance with our analytical result. We identify a zone near the 
leading edge where the multiscale solution provides a better 
description than the elliptic and the Hopf solution and patch the 
solutions. In section 5 we summarize the results and add some 
concluding remarks on future directions of research.

\section{Asymptotic solution of KdV in the small dispersion limit}
The solution of the Cauchy problem
for the KdV equation is characterized by the appearance of a zone of 
fast oscillations of wave-length of order $\e$, see e.g.~Fig.~\ref{fig1}. 
These oscillations were called by Gurevich and Pitaevski
dispersive shock waves \cite{GP}.
Following the work of \cite{LL}, \cite{V2} and  \cite{DVZ},
the  description of  the small dispersion limit of the 
KdV equation is   the following:\\
1) for $0\leq t< t_c$, where $t_c$ is a critical time,  
the  solution $u(x,t,\epsilon)$ of the KdV  Cauchy problem  is approximated, 
for small $\e$, by $u(x,t)$ which solves the Hopf equation 
\begin{equation}
\label{Hopf}
u_t+6uu_x=0.
\end{equation}
Here $t_c$ is the time when the first
point  of gradient catastrophe appears in the solution 
\begin{equation}
\label{Hopfsol}
u(x,t)=u_0(\xi),\quad x=6tu_0(\xi)+\xi,
\end{equation}
of the Hopf equation. 
From the above, the time $t_c$ of gradient catastrophe can be
evaluated from the relation
\[t_{c}=\dfrac{1}{\max_{\xi\in\mathbb{R}}[-6u_0'(\xi)]}.
\]
2) After the time of gradient catastrophe, 
the solution of the KdV equation is characterized by the
  appearance  of an interval of rapid modulated oscillations.  
According to the Lax-Levermore theory, the interval $[x^-(t), x^+(t)]$ of the oscillatory zone is 
independent of $\epsilon$. Here $x^-(t)$ and $x^+(t)$  are
 determined from the initial data and satisfy the condition  $x^-(t_c)=x^+(t_c)=x_c$ where $x_c$ is the $x$-coordinate of the point of gradient catastrophe of the Hopf solution.
Outside the interval  $[x^-(t), x^+(t)]$ the leading order asymptotics of $u(x,t,\e)$  as $\e\ra 0$  is described by the solution of the Hopf equation (\ref{Hopfsol}).
Inside  the interval  $[x^-(t), x^+(t)]$ the solution
 $u(x,t,\e)$  is  approximately  
described, for small $\e$,  by  the elliptic solution of KdV 
\cite{GP}, \cite{LL}, \cite{V2}, \cite{DVZ},
\begin{equation}
\label{elliptic}
u(x,t,\e)\simeq +\beta_1+\beta_2+\beta_3+2\alpha+ 2\e^2\frac{\partial^2}{\partial
x^2}\log\theta(\Omega(x,t);\mathcal{T})
\end{equation}
where
\begin{equation}
\label{Omega}
\Omega=\dfrac{\sqrt{\beta_1-\beta_3}}{2\e K(s)}[x-2 t(\beta_1+\beta_2+\beta_3) -q]
\end{equation}
and
\begin{equation}
\label{alpha}
\alpha=-\beta_{1}+(\beta_{1}-\beta_{3})\frac{E(s)}{K(s)},\;\;\mathcal{T}=i\dfrac{K'(s)}{K(s)},
\;\; s^{2}=\frac{\beta_{2}-\beta_{3}}{\beta_{1}-\beta_{3}}
\end{equation}
with  $K(s)$ and $E(s)$ the complete elliptic integrals of the first 
and second kind, $K'(s)=K(\sqrt{1-s^{2}})$;
 $\theta$ is the Jacobi elliptic theta function defined by the 
Fourier series
\begin{equation}
\label{theta}
\theta(z;\mathcal{T})=\sum_{n\in\mathbb{Z}}e^{\pi i n^2\mathcal{T}+2\pi i nz}.
\end{equation}
For constant values of the  $\beta_i$ the formula (\ref{elliptic}) is an exact solution of KdV well
known in the theory of finite gap integration \cite{IM}, \cite{DN0}. 
However in  the description 
of the leading order asymptotics of $u(x,t,\e)$ as $\e\ra 0$,
 the quantities $\beta_i$ depend on $x$ and $t$  and evolve
 according to the Whitham equations \cite{W}
\begin{equation}
\label{Whitham}
\dfrac{\partial}{\partial t}\beta_i+v_i\dfrac{\partial}{\partial x}\beta_i=0,\quad i=1,2,3,
\end{equation}
where the speeds $v_i$ are given by the formula
\begin{equation}
    v_{i}=4\frac{\prod_{k\neq
     i}^{}(\beta_{i}-\beta_{k})}{\beta_{i}+\alpha}+2(\beta_1+\beta_{2}+\beta_{3}),
    \label{eq:la0}
\end{equation}
with $\alpha$ as in (\ref{alpha}). 
The formula for $q$ in the phase $\Omega$ in (\ref{Omega}) that we are giving below was introduced in \cite{GK} 
and  looks different but  is equivalent to the one in \cite{DVZ}
\begin{equation}
\label{q0}
    q(\beta_{1},\beta_{2},\beta_{3}) = \frac{1}{2\sqrt{2}\pi}
    \int_{-1}^{1}\int_{-1}^{1}d\mu d\nu \frac {f_-( \frac{1+\mu}{2}(\frac{1+\nu}{2}\beta_{1}
	+\frac{1-\nu}{2}\beta_{2})+\frac{1-\mu}{2}\beta_{3})}{\sqrt{1-\mu}
    \sqrt{1-\nu^{2}}},
\end{equation}
where $f_-(y)$ is the inverse function of the decreasing part of the initial data $u_0$.
The above formula for $ q(\beta_1, \beta_2, \beta_3)$ is valid as long as $\beta_1 > \beta_2 > \beta_3 > -1$. When $\beta_3$ reaches the minimum value $-1$ 
and passes over the negative hump, it is necessary to take into account 
also the increasing part of the initial data $f_+(u)$ in formula (\ref{q0}). 
We denote by $T$ this time. For $t > T>t_c$ we introduce the variable $X_3$ 
defined by $u_0(X_3) = \beta_3$ which is still monotonous. For values of $X_3$ beyond the hump, namely $X_3 > 0$, we have to substitute (\ref{q0}) 
by the formula 
\begin{equation}
\label{q01}
    q(\beta_{1},\beta_{2},\beta_{3}) = \frac{1}{\sqrt{2}\pi}
    \int_{\beta_2}^{\beta_1}d\lambda\dfrac{\left(\int_{\beta_3}^{-1}d\mu \frac {f_+(\mu)}{\sqrt{\lambda-\mu}}+\int_{-1}^{\lambda}
\frac {f_-(\mu)}{\sqrt{\lambda-\mu}}\right)}{\sqrt{(\beta_1-\lambda)(\lambda-\beta_2)(\lambda-\beta_3)}}.
\end{equation}
The  function $q=q(\beta_1,\beta_2,\beta_3)$ is symmetric with respect to $\beta_1,\beta_2$ and 
$\beta_3$,  and satisfies a linear
over-determined system of Euler-Poisson-Darboux type. It  has been  
introduced in the work of Fei-Ran Tian \cite{FRT1}.
The Whitham equations  (\ref{Whitham}) can be integrated through the   
so called hodograph transform, which   generalizes the method of characteristics, 
and which gives the solution   in the implicit form \cite{T}
\begin{equation}
\label{hodograph}
x=v_it+w_i,\quad i=1,2,3,
\end{equation}
where  the $v_i$  are defined in (\ref{eq:la0}) and the $w_i=w_i(\beta_1,\beta_2,\beta_3)$  
are obtained from an algebro-geometric procedure \cite{K} by the formula \cite{FRT1}
\begin{equation}
    w_{i} =
    \frac{1}{2}\left(v_{i}-2\sum_{k=1}^{3}\beta_{k}\right)\frac{
    \partial q}{\partial\beta_{i}}+q,\quad i=1,2,3,
    \label{eq:w}
\end{equation}
with $q$ defined in (\ref{q0}) or (\ref{q01}).
Formula (\ref{hodograph}) solves  the initial value problem for the Whitham equations  (\ref{Whitham})
 with  boundary conditions:\\
\noindent
a) {\it leading   edge:} 
\begin{equation}
\label{leadboundary}
\begin{split}
&\beta_1=\mbox{ the Hopf solution (\ref{Hopf})}\\ 
&\beta_2=\beta_3,
\end{split}
\end{equation}
b) {\it trailing  edge:}
\begin{equation}
\label{trailboundary}
\begin{split}
&\beta_2=\beta_1\\
&\beta_3=\mbox{ the Hopf solution (\ref{Hopf})}.
\end{split}
\end{equation}
In \cite{GK} we have solved numerically the initial value problem for the Whitham equations. In this way
 we could perform a numerical comparison between  the KdV small dispersion solution 
and the asymptotic formula (\ref{elliptic}) (see Fig.~\ref{fig5}).
While in the interior of the oscillatory zone the error scales 
numerically like $\epsilon$,
at the left boundary of the oscillatory zone the error scales numerically like $\epsilon^{\frac{1}{3}}$.
To derive a more satisfactory asymptotic approximation of the KdV 
small dispersion limit in the vicinity of this point, 
we perform in the next section a double scaling expansion of the KdV equation, following the double 
scaling limits appearing in random matrix theory. Before doing this analysis, we study the elliptic solution (\ref{elliptic}) in the limit when the oscillations go to zero.

\subsection{Small amplitude limit of the elliptic solution}
 We study the elliptic solution (\ref{elliptic}) near the leading edge, namely  when oscillations go to zero.
To avoid degeneracies,  we rewrite  the system (\ref{hodograph}) 
in the  equivalent form 
\begin{equation}
\label{lead0}
\left\{
\begin{aligned}
&(v_1t+w_1-x)(\alpha+\beta_1)=0\\
&v_2t+w_2-x=0\\
&\dfrac{1}{(\beta_2-\beta_3)}[(v_2-v_3)t+w_2-w_3]=0.
\end{aligned}
\right.
\end{equation}
and perform the limit $\delta\rightarrow 0$ where 
\[
\beta_2=v+\delta,\quad \beta_3=v-\delta,\quad   \beta_1=u.
\]
To simplify our calculation we restrict ourselves to the case $t_c<t<T$.
The following limit holds:
\begin{equation}
\label{elliptictrail}
\dfrac{E(s)}{K(s)}=1-\dfrac{\delta}{v-\beta_1}+\dfrac{3}{4}\dfrac{\delta^2}{(v-\beta_1)^2}+O(\delta^3)
\end{equation}
such that 
\begin{equation}
\label{alphaexp}
\alpha=-v-\dfrac{\delta^2}{4(u-v)}.
\end{equation}
Furthermore the following identities hold
\begin{align}
\label{eqex1}
f_-(u)=&[2(u-v)\partial_{u}q(u,v,v)+q(u,v,v)]\\
\label{eqex2}
\Phi(v,u)=&\partial_{v}q(u,v,v)+\partial_{u}q(u,v,v)
\end{align}
where
\begin{equation}
    \Phi(v,u)= \frac{1}{2\sqrt{2}}\int_{-1}^{1}d\mu \frac{f'_-(
\frac{1+\mu}{2}v+\frac{1-\mu}{2}u)}{
    \sqrt{1-\mu}}=
\frac{1}{2\sqrt{v-u}}\int_{u}^{v}d\mu\frac{f'_-(\mu)}{
    \sqrt{v-\mu}}.
    \label{Phi}
\end{equation}
Substituting  (\ref{alphaexp}) (\ref{eqex1}) and (\ref{eqex2}) into (\ref{lead0}) 
we arrive at the system 
\begin{equation}
\label{lead00}
\left\{
\begin{aligned}
x&= 6tu+f_-(u)-\delta^2\dfrac{(x-6tu-f_-(u)-2(u-v)(6t+\Phi(v;u))}{8(v-u)^2}+O(\delta^4),
\\
x&=6tu+f_-(u)+2(v-u)[6t+\Phi(v,u)]+\delta[6t+\Phi(v,u)+(v-u)\partial_{v}\Phi(v,u)]\\
&+\dfrac{\delta^2}{4(u-v)}[6t-2(u-v)^2\partial_{vvv}q(u,v,v)+4(u-v)\partial_{vv}q(u,v,v)+\dfrac{3}{2}\partial_v q(u,v,v,)]+O(\delta^3)\\
0&=6t+\Phi(v,u)+(v-u)\partial_{v}\Phi(v,u)+O(\delta).
\end{aligned}
\right.
\end{equation}
From the above we deduce that, in the limit $\delta\rightarrow 0$, 
the hodograph transform (\ref{lead0}) reduces to the form  (see \cite{FRT1}\cite{GT})
\begin{equation}
\label{lead}
\left\{
\begin{aligned}
&6ut+f_-(u)-x=0\\
&    \Phi(v,u)+6t =  0\\
    &\partial_{v}\Phi(v,u)  =  0.
    \end{aligned}
\right.
\end{equation}
The above system enables one to  determine $x$, $u$ and
$v$ as a function of time. This time dependence will be denoted $x=x^-(t)$, $u=u(t)$ and $v=v(t)$.
We are interested in studying the behavior of the elliptic solution
(\ref{elliptic})
 near the leading edge, namely when $x-x^-(t)$ is small and $x>x^-(t)$.
For this purpose we introduce two unknown functions of $x$ and $t$,
\[
\delta=\delta(x-x^-(t)),\quad \Delta=\Delta(x-x^-(t))
\]
which  tend to zero as $x\rightarrow x^-(t)$. We are going to 
  derive the dependence of $\Delta $ as a function of $x-x^-(t)$.
Let us fix
\begin{equation}
\label{limit}
\beta_2=v+\delta, \quad \beta_3=v-\delta,\;\;\delta\rightarrow 0
\quad \beta_1=u+\Delta,\;\;\Delta\rightarrow 0.
\end{equation}
Using the first equation of  (\ref{lead00})   we obtain 
\[
0= x-6tu-f_-(\beta_1)+\delta^2\dfrac{(x-6t\beta_1-f_-(\beta_1)-2(\beta_1-v)(6t+\Phi(v;\beta_1))}{8(\beta_3-u)^2}+O(\delta^4).
\]
Expanding the above expression near $\beta_1(x,t)=u(t)+\Delta(x,t)$,  using the identity
\begin{equation}
\label{dPhi}
\dfrac{\partial}{\partial \beta_1}\Phi(\beta_3;\beta_1)=\dfrac{\Phi(\beta_3;\beta_1)-\Phi(\beta_1;\beta_1)}{2(\beta_3-\beta_1)}
\end{equation}
and (\ref{lead}) we arrive at the expression
\begin{equation}
0\simeq x-x^-(t)-(6t+f'_-(u))\Delta+\dfrac{\delta^2}{8(v-u)^2}(x-x^-(t)),
\end{equation}
so that 
\begin{equation}
\label{Delta}
\Delta\simeq \dfrac{x-x^-(t)}{6t+f'_-(u)}.
\end{equation}
Using the second equation in (\ref{lead00}) we arrive at
\begin{equation}
\label{delta}
x-x^-(t)\simeq \delta^2\,c
\end{equation}
where 
\begin{equation}
\label{c}
\begin{split}
c&=[6t-2(u-v)^2\partial_{vvv}q(u,v,v)+4(u-v)\partial_{vv}q(u,v,v)+
\dfrac{3}{2}\partial_v q(u,v,v,)]/4(u-v)\\
&=-\dfrac{u-v}{2}\partial_{vv}\Phi(v;u).
\end{split}
\end{equation}
Therefore 
\[
\dfrac{\delta^2}{\Delta}=O(1).
\]
\begin{theorem}
The elliptic solution (\ref{elliptic}) in the limit (\ref{limit}) takes the form
\begin{equation}
\label{ellipticos}
u(x,t,\e)\simeq  u(t)+\dfrac{x-x^-(t)}{6t+f'_-(u)}+2\delta\cos\left(2\pi\dfrac{\Omega^-}{\e}\right)+
\delta^2\dfrac{\cos\left(4\pi\dfrac{\Omega^-}{\e}\right)-1}{2(u(t)-v(t))},
\end{equation}
where the phase $\Omega^-$ takes the form
\begin{equation}
\label{phasetrail}
2\pi\Omega^-=\phi_0+\phi_2
\end{equation}
with
\begin{equation}
\label{eta12}
\phi_0(t)=-16\int_{t_c}^t(u(\tau)-v(\tau))^{\frac{3}{2}}d\tau,\;\;\; \phi_2(x,t)=2\sqrt{u(t)-v(t)}(x-x^-(t)),
\end{equation}
and $u(t)$, $v(t)$ and $x^-(t)$ solve the system (\ref{lead}).
\end{theorem}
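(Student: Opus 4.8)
The plan is to degenerate the finite-gap formula (\ref{elliptic}) in the small-modulus regime $s\to0$ produced by (\ref{limit}) and to organize the result as a power series in $\delta$. First I would record the small-modulus data: from (\ref{alpha}) and (\ref{limit}), $s^2=(\beta_2-\beta_3)/(\beta_1-\beta_3)=2\delta/(u-v)+O(\delta^2)$, together with the standard expansions $K(s)=\frac{\pi}{2}(1+\frac{s^2}{4}+\cdots)$, $E(s)=\frac{\pi}{2}(1-\frac{s^2}{4}+\cdots)$ and, for the elliptic nome, $e^{\pi i\mathcal{T}}=e^{-\pi K'(s)/K(s)}=\frac{s^2}{16}+O(s^4)=\frac{\delta}{8(u-v)}+O(\delta^2)$. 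These reduce the theta function (\ref{theta}) to its trigonometric limit.

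Next I would expand the oscillatory term $2\e^2\partial_x^2\log\theta(\Omega;\mathcal{T})$. Writing $Q:=e^{\pi i\mathcal{T}}$, the Fourier series (\ref{theta}) gives $\log\theta=2Q\cos(2\pi\Omega)-Q^2(1+\cos(4\pi\Omega))+O(Q^3)$. The dominant contribution of $\partial_x^2$ comes from the fast phase, whose wavenumber is $\kappa:=\partial_x(2\pi\Omega)=\pi\sqrt{\beta_1-\beta_3}/(\e K(s))\to 2\sqrt{u-v}/\e$, so that $\partial_x^2\cos(2\pi\Omega)\simeq-\kappa^2\cos(2\pi\Omega)$ and $\partial_x^2\cos(4\pi\Omega)\simeq-4\kappa^2\cos(4\pi\Omega)$. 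Substituting $Q$ and $\kappa$ yields a first harmonic of amplitude $2\delta$ and a second harmonic $\frac{\delta^2}{2(u-v)}\cos(4\pi\Omega^-/\e)$, as in (\ref{ellipticos}); the leading harmonic carries an overall sign that is reabsorbed by fixing the half-period convention in the definition of $\Omega^-$, a shift to which the second harmonic is insensitive. The non-oscillatory part $\beta_1+\beta_2+\beta_3+2\alpha$ I would evaluate from (\ref{alphaexp}) and (\ref{Delta}) as $u+\Delta-\delta^2/(2(u-v))=u(t)+\frac{x-x^-(t)}{6t+f'_-(u)}-\frac{\delta^2}{2(u-v)}$; its $\delta^2$ piece combines with the second harmonic to give exactly the term $\delta^2[\cos(4\pi\Omega^-/\e)-1]/(2(u-v))$ of (\ref{ellipticos}).

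It remains to identify the phase. Taking the $s\to0$ limit of (\ref{Omega}) gives $2\pi\Omega^-\to2\sqrt{u-v}\,[x-2t(u+2v)-q(u,v,v)]$. Splitting $x=(x-x^-(t))+x^-(t)$ immediately produces $\phi_2=2\sqrt{u-v}\,(x-x^-(t))$ as in (\ref{eta12}), and leaves the $x$-independent remainder $\phi_0(t)=2\sqrt{u-v}\,[x^-(t)-2t(u+2v)-q(u,v,v)]$. Here I would use the first equation of (\ref{lead}), $x^-(t)=6u t+f_-(u)$, to reduce the bracket to $4t(u-v)+f_-(u)-q(u,v,v)$, and then the identity (\ref{eqex1}) to obtain the compact form $\phi_0=4(u-v)^{3/2}\,[2t+\partial_uq(u,v,v)]$. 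The claim $\phi_0=-16\int_{t_c}^t(u-v)^{3/2}d\tau$ then follows from an initial condition together with a derivative identity: at $t=t_c$ one has $u(t_c)=v(t_c)$ since the oscillatory interval collapses at gradient catastrophe, so $\phi_0(t_c)=0$, while differentiating $\phi_0$ in $t$ must reproduce $-16(u-v)^{3/2}$.

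The main obstacle is precisely this last derivative identity. Carrying it out requires the equations of motion for $u(t)$ and $v(t)$, obtained by differentiating the constraints $\Phi(v,u)+6t=0$ and $\partial_v\Phi(v,u)=0$ of (\ref{lead}), which give $\dot u=-6/\partial_u\Phi$ and $\dot v=-\partial_{uv}\Phi\,\dot u/\partial_{vv}\Phi$, together with the expression of $f_-$ and $\Phi$ through $q$ via (\ref{eqex1}) and (\ref{eqex2}). The delicate point is the bookkeeping of the diagonal derivatives of $q(u,v,v)$: every second-order derivative of $q$ that arises must be reduced to first-order ones through the Euler--Poisson--Darboux relations satisfied by $q$, the same structure that underlies (\ref{dPhi}) and the second form of $c$ in (\ref{c}), after which the identity $\frac{d}{dt}\phi_0=-16(u-v)^{3/2}$ should close. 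I would also settle the half-period normalization of $\Omega^-$ at this stage, since it is the only place where the overall sign of the first harmonic in (\ref{ellipticos}) is fixed.
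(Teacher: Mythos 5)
Your overall architecture is the same as the paper's: trigonometric degeneration of the theta function with nome $e^{i\pi\mathcal{T}}=\delta/(8(u-v))+\dots$, second $x$-derivative dominated by the fast phase, the constant coming from $\beta_1+\beta_2+\beta_3+2\alpha$ combining with the second harmonic into $\delta^2[\cos(4\pi\Omega^-/\e)-1]/(2(u-v))$, and identification of $\phi_0$ through its $t$-derivative plus the vanishing at $t_c$. Your explicit treatment of the sign of the first harmonic via a half-period renormalization of $\Omega^-$ is in fact more careful than the paper, which passes over this point. However, there are two genuine gaps. First, and most importantly, the identity $\frac{d}{dt}\phi_0=-16(u-v)^{3/2}$ is exactly the crux of the theorem and you leave it as something that ``should close'' after an EPD bookkeeping you do not carry out; as stated, the proposal does not prove the formula for $\phi_0$ in (\ref{eta12}). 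The paper closes this step cleanly by \emph{not} working with the closed form $4(u-v)^{3/2}(2t+\partial_u q(u,v,v))$ but with the integral representation (\ref{phi0b}), $\phi_0=2\int_v^u\sqrt{u-\lambda}\,[\Phi(\lambda,u)+6t]\,d\lambda$: the boundary terms of $\frac{d}{dt}$ vanish because $\sqrt{u-\lambda}$ vanishes at $\lambda=u$ and $\Phi(v,u)+6t=0$ by (\ref{lead}), and the remaining integrand collapses via (\ref{dPhi}) and $\partial_t u=12(v-u)/(6t+f'_-(u))$ to give $-16(u-v)^{3/2}$ with essentially no second derivatives of $q$ appearing. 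If you want to complete your route, rewrite $\phi_0$ in this integral form first.

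Second, your derivation of the phase takes the bare $s\to0$ limit of (\ref{Omega}) — prefactor $\pi\sqrt{\beta_1-\beta_3}/(\e K(s))\to 2\sqrt{u-v}/\e$ and the bracket evaluated at $\beta_1=u$, $\beta_{2,3}=v$ except in $x$ — and this silently discards corrections of order $\delta$ and $\Delta$ to $2\pi\Omega$. Since the phase is divided by $\e$, any uncancelled such term is divergent as $\e\to0$ (in the matching regime $\delta\sim\e^{1/3}$, $\Delta\sim\e^{2/3}$ one has $\delta/\e\sim\e^{-2/3}$ and $\Delta/\e\sim\e^{-1/3}$) and would destroy the asymptotics of $\cos(2\pi\Omega^-/\e)$. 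These terms do cancel — the $O(\delta)$ contributions of $\sqrt{\beta_1-\beta_3}$ and $1/K(s)$ cancel, the $O(\delta)$ term of the bracket vanishes by the symmetry of $q$ in $\beta_2,\beta_3$, and the $O(\Delta)$ correction of the prefactor cancels the $O(\Delta)$ correction of the bracket beyond the $x$-splitting — but this must be verified, and your proposal does not address it. The paper avoids the issue by keeping $\beta_1=u+\Delta$ throughout, rewriting the bracket via (\ref{eqex1}) so that $x-6t\beta_1-f_-(\beta_1)=O(\Delta^2)$ by (\ref{Delta}), and extracting $\phi_2$ as the linear-in-$\Delta$ term of the single function $\eta_0(\beta_1,v)$; only $O(\Delta^2)$ and $O(\delta^2(x-x^-))$ phase errors then remain, and these are harmless.
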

\begin{proof}
We first prove the relation (\ref{phasetrail}).
Using the expansion 
\[
K(s)=\dfrac{\pi}{2}\left(1+\dfrac{s^2}{4}+\dfrac{9}{64}s^4+O(s^6)\right),\quad
\]
and (\ref{eqex1}) we obtain the following limit for the phase $\Omega$ in 
(\ref{Omega}) 
\begin{align*}
2\pi\Omega|_{\beta_{2,3}=v\pm\delta}&=
2\sqrt{\beta_1-v}(1-\dfrac{3\delta^2}{16(\beta_1-v)^2})
[x-6t\beta_1-f_-(\beta_1)\\
&+2(\beta_1-v)(2t+\partial_{\beta_1}q(\beta_1,v,v)
-\dfrac{\delta^2}{4}\partial^2_{v}q(\beta_1,v,v)]+O(\delta^4).
\end{align*}
Using the identity
\[
\partial_{v}^2q(\beta_1,v,v)=\partial_v\Phi(v,\beta_1)-\partial_{\beta_1}\partial_v q(\beta_1,v,v),
\]
  (\ref{lead}) and (\ref{eqex2}) we can rewrite the above in the form
\[
\partial_{v}^2q(\beta_1,v,v)=\dfrac{3(2t+\partial_{\beta_1}q(\beta_1,v,v))}{2(v-\beta_1)}+\dfrac{3}{4}\dfrac{6t+f'_-(u)}{(v-u)^2}\Delta+O(\Delta^2)
\]
so that the phase  $\Omega$ takes the form
\begin{equation}
\label{om3}
2\pi\Omega|_{\beta_{2,3}=v\pm\delta}\simeq
4(\beta_1-v)^{\frac32}(2t+\partial_{\beta_1}q(\beta_1,v,v))-
\dfrac{\delta^2}{8} \dfrac{x-x^-(t)}{(v-u)^{\frac32}}.
\end{equation}
We define
\begin{align}
\nonumber
\eta_0(\beta_1,v):=&4\sqrt{\beta_1-v}[2(\beta_1-v)t+(\beta_1-v)\partial_{\beta_1}q(\beta_1,v,v)]\\
\label{phi0b}
=&2\int_{v}^{\beta_1}\sqrt{\beta_1-\lambda}[\Phi(\lambda,\beta_1)+6t]d\lambda,
\end{align}
so that, expanding near $\beta_1=u+\Delta$,  by (\ref{lead})
\begin{equation}
\begin{split}
\label{phi0c}
\eta_0(\beta_1,v)=
\eta_0(u,v)+\phi_2(x,t)+O(\Delta^2),
\end{split}
\end{equation}
where $\phi_2(x,t)$ is defined in (\ref{eta12}).
To show that $\eta_0(u,v)$ defined in (\ref{phi0b}) coincides with $\phi_0$ 
defined in (\ref{eta12}), we differentiate (\ref{phi0b}) with respect to time,
\[
2\dfrac{d}{dt}\int_{v}^{u}\sqrt{u-\lambda}[\Phi(\lambda,u)+6t]d\lambda=-16(u-v)^{\frac{3}{2}}
\]
where we have used the identity (\ref{dPhi}) and 
$\partial_t u(t)=12 \dfrac{(v-u)}{6t+f'_-(u)}.$ Integrating the r.h.s 
of the above expression with respect to $t$ 
from $t_c$ to $t$ we obtain the formula (\ref{eta12}).

Using (\ref{phi0c})  and (\ref{Delta}) we rewrite the phase (\ref{om3}) 
 in the form
\begin{equation}
\label{Omf}
2\pi\Omega|_{\substack{\beta_1=u+\Delta\\\beta_{2,3}=v\pm\delta}}
=\phi_0+\phi_2-\dfrac{\delta^2(x-x^-(t))}{8(u-v)^{\frac{3}{2}}}+O(\Delta^2)
\end{equation}
where $\phi_0$ and $\phi_2$ are as in (\ref{eta12}). Neglecting the higher order terms in $\delta$
and $\Delta$ of the above expansion one obtains (\ref{phasetrail}).

Now we are ready to expand the theta-function expression in the limit of small amplitudes.
Using (\ref{alphaexp}) and
\[
e^{i\pi \mathcal{T}}=\dfrac{\delta}{8(u-v)}(1-\frac{\Delta}{u-v})+\dfrac{5}{128}\dfrac{\delta^2}{(u-v)^3}+O(\delta^5,\delta^3\Delta),
\]
 one derives the small amplitude limit of the Jacobi $\theta$-function (\ref{theta})
\[
\theta(z;\tau)=1+\dfrac{\delta}{4(u-v)}(1-\frac{\Delta}{u-v})\cos(2\pi z)+O(\delta^4).
\]
Substituting the above expansion in  (\ref{elliptic}) one obtains 
\[
u(x,t,\e)\simeq u(t)+\dfrac{x-x^-(t)}{6t+f'_-(u)}+2\delta(x,t)\cos\left(2\pi\dfrac{\Omega^-}{\e}\right)+
\delta(x,t)^2\dfrac{\cos\left(4\pi\dfrac{\Omega^-}{\e}\right)-1}{2(u(t)-v(t))}
.
\]
which coincides with (\ref{ellipticos}).
\end{proof}

\section{Painlev\'{e} equations at the leading edge}
In this section we present a multiscale description of the 
oscillatory behavior of a solution to the KdV equation in the small
dispersion limit close to the leading edge $x^-(t)$ where 
$\beta_2=\beta_3=v$ and $\beta_1=u$. 
We are interested in the double scaling limit to the solution of 
the KdV equation (\ref{p22})
as   $\epsilon\rightarrow 0$  and $x\rightarrow x^-(t)$ in such a way that 
$x-x^-(t)\propto \epsilon^{2/3}$.
 We introduce the rescaled coordinate 
$y$ near the leading edge, 
\begin{equation}
    y=\epsilon^{-2/3}(x-x^-(t)),
    \label{p23}
\end{equation}
which transforms the KdV equation (\ref{p22}), 
to the form
\begin{equation}
   \epsilon^{\frac{2}{3}} u_{t}+\epsilon^{\frac{2}{3}} u_{yyy}+(6u-x^-_{t})u_{y}=0
    \label{p24},
\end{equation}
where $x^-_t=\dfrac{d}{dt} x^-(t)$.
The substitution (\ref{p23}) has the effect that the linear term of (\ref{p24}) 
is just the Airy equation $u_{t}+u_{yyy}=0$ which has
oscillatory solutions.

It is known \cite{DKMVZ},\cite{KS} that the corrections to the 
Hopf solution near the leading edge are of the order $\epsilon^{1/3}$.
 We thus make the  ansatz
\begin{equation}
    u(y,t,\epsilon) = U_{0}+\epsilon^{1/3}U_{1}+\epsilon^{2/3}U_{2}+
    \epsilon U_{3}+\ldots
    \label{p25},
\end{equation}
where $U_{0}=u(t)$ is the solution at the leading edge. 
We assume that $U_{k\geq 1}$ contains oscillatory 
terms with oscillations of the order $1/\epsilon$.
In particular 
\begin{equation}
    U_{1}=a(y,t)\cos \left(\frac{\psi(y,t)}{\epsilon}\right), 
    \label{p26}
\end{equation}
where 
\begin{equation}
    \psi(y,t) = 
	\psi_{0}(y,t)+\epsilon^{1/3}\psi_{1}(y,t)
	+\epsilon^{2/3}\psi_{2}(y,t)+\epsilon \psi_{3}(y,t)
+\ldots
\label{p27}.
\end{equation}
Similarly we put 
\begin{equation}
    U_{2}=b_{1}(y,t)+b_{2}(y,t)\cos\left(\frac{2\psi(y,t)}{\epsilon}
    \right),\label{p28}
\end{equation}
and
\begin{equation}
    U_{3}=c_{0}(y,t)+c_{2}(y,t)\sin\left(\frac{2\psi(y,t)}{\epsilon}\right) +c_{3}(y,t)\cos 
	\left(\frac{3\psi(y,t)}{\epsilon}\right) 
	\label{p29}.
\end{equation}
Terms proportional to  
$\sin(\psi/\epsilon)$ can be absorbed by a redefinition of $\psi$.
Since we impose no further restrictions on $\psi$ here,  such terms are therefore omitted in all orders.  We 
only consider terms proportional to $\cos(\psi/\epsilon)$ in order 
$\epsilon^{1/3}$ and the necessary terms in higher order to 
compensate the terms due to the nonlinearities in (\ref{p24}). 

If we enter equation (\ref{p22}) with this ansatz,
 we immediately obtain from the term of order $\epsilon^0$ that $\psi_{0,y}=\psi_{1,y}=0$.
From the term of  order  $\epsilon^{\frac13}$ we get
\begin{equation}
    \psi_{2,y}^{3}-(6U_{0}-x^-_{t})\psi_{2,y}-\psi_{0,t}=0
    \label{p210}.
\end{equation}
In order $\epsilon^{2/3}$ we obtain the following equations
\begin{align}
\label{p211}
    &b_{2}-\frac{a^{2}}{2\psi_{2,y}^{2}}=0\\
    \label{p212}
    &\psi_{3,y}(3\psi_{2,y}^{2}-6U_{0}+x^-_{t})-\psi_{1,t}=0\\
     \label{p213}
   & \dfrac{d}{dy}[a^2(3\psi_{2,y}^{2}-6U_{0}+x^-_{t})]=0.
\end{align}
In order $\epsilon$ we get 
\begin{align}
 \label{p216}
    &U_{0,t}+(6U_{0}-x^-_t)b_{1,y}+3aa_{y}=0\\
 \label{p217}    
& \psi_{2,y}\dfrac{d}{dy}\left(a^2\psi_{3,y}\right)=0\\
   \label{p219}
    &2a^{2}\psi_{3,y}=0\\
     \label{p221}
&\psi_{2,y}^{2}a_{yy}+a(2b_1\psi_{2,y}^2+\frac{1}{3}\psi_{2,y}\psi_{2,t})+\frac{1}{2}a^3=0\\
\label{p227a} 
&c_{2}=-\frac{aa_{y}}{\psi_{2,y}^{3}}\\
&c_{3}=\frac{3a^{3}}{16\psi_{2,y}^{4}}.
   \label{p227b} 
\end{align}
A  solution to (\ref{p212}), (\ref{p213}) and (\ref{p219})  is
\begin{equation}
   3\psi_{2,y}^{2}=6U_{0}-x^-_t+\dfrac{C(t)}{a^2}, \quad \psi_{1,t}=0,
    \label{p214}
\end{equation}
where $C(t)$ is a free  function of $t$.
Note that
\begin{equation}
    U_0(t)=u(t),\quad x^-_{t}=12v(t)-6u(t)
    \label{p222},
\end{equation}
where $u(t)$ and $v(t)$ are defined in (\ref{lead}). The above equations imply that 
\begin{equation*}
    \psi_{2}(y,t)=2y\sqrt{(u-v)}+\int\dfrac{C(t)}{a^2(y)}dy+h(t),
\end{equation*}
where $h(t)$ is free  function of $t$.
Comparing the above relation with the formula of $\phi_2$ in (\ref{eta12})
 of the phase in the small amplitude expansion   we can conclude that 
$ C(t)=0$ and $h(t)=0$
so that  
\begin{equation}
    \psi_{2}(y,t)=2y\sqrt{u(t)-v(t)}
    \label{p223}.
\end{equation}
From (\ref{p210}), (\ref{p222}) and (\ref{p223}) we derive that
\[
\psi_{0,t}=-16(u-v)^{\frac32},
\]
namely
\begin{equation}
\psi_{0}(t)=-16\int_{t_c}^t(u(\tau)-v(\tau))^{\frac32}d\tau.
\end{equation} 

From  (\ref{p216}) we find
\begin{equation}
\label{p220}
\begin{split}
    b_{1}&=-\frac{a^{2}}{2\psi_{2,y}^{2}}-\frac{yU_{0,t}}{3\psi_{2,y
    }^{2}}+k(t)\\
&=-\frac{a^{2}}{8[u(t)-v(t)]}+\frac{y}{6t+f'_-(u)}+k(t),
\end{split}
\end{equation}
where $k(t)$ is a free function of $t$. It will be fixed by matching it 
with the elliptic solution in the Whitham zone. 
Substituting (\ref{p220}) and (\ref{p223}) into (\ref{p221}) we 
arrive at the equation
\begin{equation}
    4(u(t)-v(t))a_{yy}-\frac{2}{3}v_t(t)a\left(
    y-\frac{12k(t)(u(t)-v(t))}{v_t(t)}\right)=\frac{a^{3}}{2}
    \label{p224}.
\end{equation}
Making the substitution
\begin{equation}
    A=6^{\frac{1}{3}}a/(4v_{t}^{1/3}(u-v)^{1/6}),\quad 
    z=\left(\frac{v_{t}}{6(u(t)-v(t))}\right)^{1/3}(y-y_{0})
    \label{p226}
\end{equation}
with     $y_{0}=12k(t)(u(t)-v(t))/v_{t}$, we arrive at the equation
\begin{equation}
    A_{zz}=zA+2A^{3}
    \label{p225},
\end{equation}
which is  a special case of the  Painlev\'e-II equation $  A_{zz}=zA+2A^{3}-\gamma$, with  $\gamma=0$.

Since we are only interested in terms up to order $\epsilon^{1/3}$ in 
$u(x,t,\epsilon)$, the terms $b_{1}$, $b_{2}$, $c_{0}$, $c_{2}$ and $c_{3}$ are not important for us. 
However, we had to go to order $\epsilon$ to determine $\psi_{3}$ 
which will contribute to the $\epsilon^{1/3}$ terms in $u$. 

To sum up we get for $u(x,t,\e)$
\begin{equation}
 \label{p228.1}
    u(x,t,\epsilon) = u(t)
    +\epsilon^{1/3}a\cos\left(\frac{\psi}{\epsilon}\right)+\epsilon^{2/3}\left[ \dfrac{a^2(\cos(2\psi/\e)-1)}
{8(u(t)-v(t))}+k(t)+\dfrac{y}{6t+f'_-(u)}\right]+
O(\epsilon)
\end{equation}
where
\[
\psi(y,t)=-16\int_{t_c}^t(u(\tau)-v(\tau))^{\frac32}d\tau+\epsilon^{\frac{1}{3}}h+
\epsilon^{\frac{2}{3}}[2y\sqrt{u(t)-v(t)}]+
\epsilon\psi_3(t)+O(\e^{\frac{4}{3}}),
\]
and $h$ is an integration constant. There are free functions of $t$ in the integration of the 
multi-scale equations, namely the functions  $k(t)$ and $\psi_3(t)$ and the constant $h$.
Moreover, the solution of the Painlev\'e II equations needs to be fixed.
We  fix the constants by comparing  the multiscale solution (\ref{p228})  
 with the elliptic solution (\ref{ellipticos})  at the border of the Whitham zone.
Indeed comparing (\ref{p228.1}),  (\ref{ellipticos}) and (\ref{Omf}) we obtain 
\begin{equation}
\label{delta1}
\delta=\frac{1}{2}\e^{\frac{1}{3}}a,\quad \Delta=O(\e^{\frac{2}{3}})
\end{equation}
and $
 k(t)=0,\;\; \psi_3(t)=0$ and $h=0$.
Therefore the multiscale solution takes the form
\begin{equation}
 \label{p228}
    u(x,t,\epsilon) = u(t)
    +\epsilon^{1/3}a(y,t)\cos\left(\frac{\psi}{\epsilon}\right)+\epsilon^{2/3}\left[ \dfrac{a^2(\cos(2\psi/\e)-1)}
{8(u(t)-v(t))}+\dfrac{y}{6t+f'_-(u)}\right]+
O(\epsilon),
\end{equation}
where $y=\epsilon^{-2/3}(x-x^-(t))$, $a(y,t)$ satisfies (\ref{p224}) and 
\begin{equation}
\label{psi}
\psi(y,t)=-16\int_{t_c}^t(u(\tau)-v(\tau))^{\frac32}d\tau+
2\epsilon^{\frac{2}{3}}y\sqrt{u(t)-v(t)}+O(\e^{\frac{4}{3}}).
\end{equation}
For the numerical comparison in the following section, we consider 
terms up to order $\epsilon^{1/3}$ in $u$ in (\ref{p228})
\begin{equation}
\label{p228b}
u(x,t,\epsilon) = u(t)
    +\epsilon^{1/3}a(y,t)\cos\left(\frac{\psi(y,t)}{\epsilon}\right)+O(\epsilon^{2/3}),
\end{equation}
 where $\psi(y,t)$ is as given above.
For fixing the particular solution of the Painlev\'e-II equation (\ref{p225}) 
the following considerations are needed.
For large $x<x^-(t)$, the solution of KdV is essentially approximated by 
the solution of the Hopf equation, and the term of order $\epsilon^{1/3} $ has to be negligible in (\ref{p228b}), namely $a(y)\simeq 0$ for large negative
 $y=(x-x^-(t))\epsilon^{-2/3}$.
For $x<x^-(t)$ 
\[
 z=\left(\frac{v_{t}}{6(u(t)-v(t))}\right)^{1/3}(x-x^-(t))\epsilon^{-2/3}>0
\]
because  $v_t=\dfrac{6}{(u-v)\partial_{vv}\Phi(v;u)}<0$ since   $\partial_{vv}\Phi(v;u)<0$ and $u>v$. Therefore for  $x\ll x^-(t)$ we conclude that  
\begin{equation}
\label{HM0}
\lim_{z\rightarrow +\infty} A(z)=0.
\end{equation}
For $x>x^-(t)$, from the small amplitude limit of the elliptic solution 
of the KdV equation we obtain combining (\ref{delta}), (\ref{c}) and 
(\ref{delta1})
\[
 \sqrt{-\dfrac{2(x-x^-(t))}{(u-v)\partial_{vv}\Phi(v;u)}}\simeq \delta= \dfrac{1}{2}\epsilon^{1/3} a
\]
which, in the limit $\epsilon\rightarrow 0$ or $y\rightarrow +\infty$ gives
\[
\lim_{y\rightarrow +\infty} a(y)= 2\sqrt{-\dfrac{2y}{(u-v)\partial_{vv}\Phi(v;u)}}
\]
or  equivalently, by (\ref{p226}),
\begin{equation}
\label{HM1}
\lim_{z\rightarrow -\infty}A(z)=\sqrt{-z/2}.
\end{equation}
The existence and uniqueness of the solution of (\ref{p225}) satisfying (\ref{HM0}) and  (\ref{HM1}) was first established by Hastings and Mcleod \cite{HM} 
(see also later works of \cite{Kap} and \cite{Cl}).
It is also worth noticing that the asymptotics of $A(z)$ at $z\rightarrow +\infty$ can be specified as
\begin{equation}
\label{HM2}
\lim_{z\rightarrow+\infty}A(z)=\mbox{Ai}(z)
\end{equation}
where $\mbox{Ai}(z)$ is the Airy function. Moreover the asymptotic condition 
(\ref{HM2}) characterizes the solution  $A(z)$ uniquely, so that (\ref{HM1}) and (\ref{HM2}) constitutes an example of the so called connection formula 
for the Painlev\'e equations (see e.g. \cite{FI}, \cite{CM}).

\section{Comparison of the multiscale expansion  and the asymptotic 
solution to the small dispersion KdV}

The numerical evaluation of the asymptotic solution based on the Hopf 
and the elliptic solution is described in \cite{GK}.
 To evaluate the multiscale solution (\ref{p228}), one needs in 
addition to the quantities computed there 
the Hastings-McLeod solution to the Painlev\'e-II equation. This 
solution was calculated numerically by Tracy and Widom \cite{trwi} 
with standard solvers for ordinary  differential equation and by Pr\"ahofer and Spohn 
\cite{prsp,prspweb} with in principle arbitrary precision with a 
Taylor series approach. The general family of solutions of  Painlev\'e-II such that 
$\lim_{z\rightarrow +\infty}A(z)=a\mbox{Ai}(z)$ with $a$ positive constant was studied numerically
 in \cite{Ros} and analytically in \cite{CM}.
Solutions to Painlev\'e-II in the complex 
plane were studied analytically and numerically by Fokas and Tanveer in \cite{fokas}. 
We use here an approach based on spectral 
methods which is described briefly in the appendix. This approach is 
both efficient and of high precision and can directly be combined with 
the numerics of \cite{GK}.

\paragraph{\it Times $t\gg t_{c}$}
Close to breakup the multiscale expansion is expected to be 
inefficient since it is best near the leading edge, and since at 
breakup both the leading and the trailing edge coincide. We will 
discuss this solution close to breakup below, but first we will study 
it for time $t=0.4\gg t_{c}=0.216\ldots$. 
In Fig.~\ref{figp22in11e4.413} one can see that the multiscale solution 
gives an excellent approximation of the KdV solution for $x<x^{-}(0.4)=-3.2297$ 
and in the Whitham zone close to $x^{-}$. For larger values of $x$, 
the solutions are out of phase and the values of the multiscale 
solution are shifted towards positive values.
\begin{figure}[!htb]
\centering
\epsfig{figure=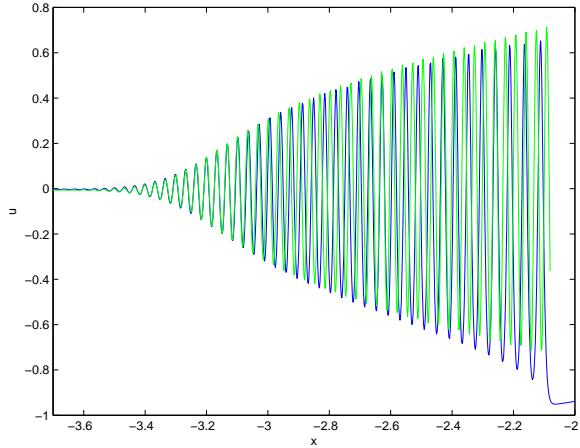, width=0.7\textwidth}
\caption{The blue line is the solution of the KdV equation for the 
initial data $u_0(x)=-\mbox{sech}^2x$ and $\epsilon=10^{-2}$ for $t=0.4$, 
and the green line is the corresponding  multiscale solution
given by formula  (\ref{p228}).}
\label{figp22in11e4.413}
\end{figure}
The difference of the two solutions is shown in 
Fig.~\ref{figp2delta1e4.413}.
\begin{figure}[!htb]
\centering
\epsfig{figure=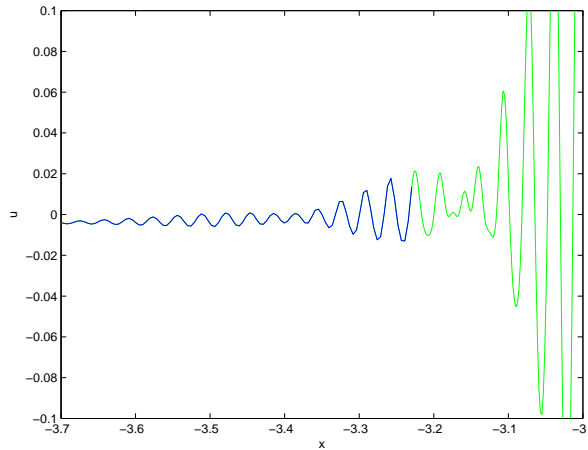, width=0.7\textwidth}
\caption{The difference of the KdV and the multiscale solution 
for the initial data $u_0(x)=\mbox{sech}^2x$ and $\epsilon=10^{-2}$ 
for $t=0.4$. The curve is plotted in green in the Whitham zone.} 
\label{figp2delta1e4.413}
\end{figure}
From this figure it is even more obvious that the multiscale solution 
is a valid approximation in the Whitham zone near the leading edge, 
but the difference increases rapidly for $|x|\gg x^{-}$. 

\paragraph{\it $\epsilon$ dependence}
In \cite{GK} it was shown that the asymptotic description becomes more accurate 
with decreasing $\epsilon$. The same is true for the multiscale 
solution as can be seen in Fig.~\ref{figp2delta4e13}. 
The zone, where the multiscale solution gives a  better
approximation than the asymptotic elliptic solution, shrinks with $\epsilon$. For $x\gg 
x^{-}(t)$, the multiscale solution is always only a poor 
approximation to the KdV solution. 
\begin{figure}[!htb]
\centering
\epsfig{figure=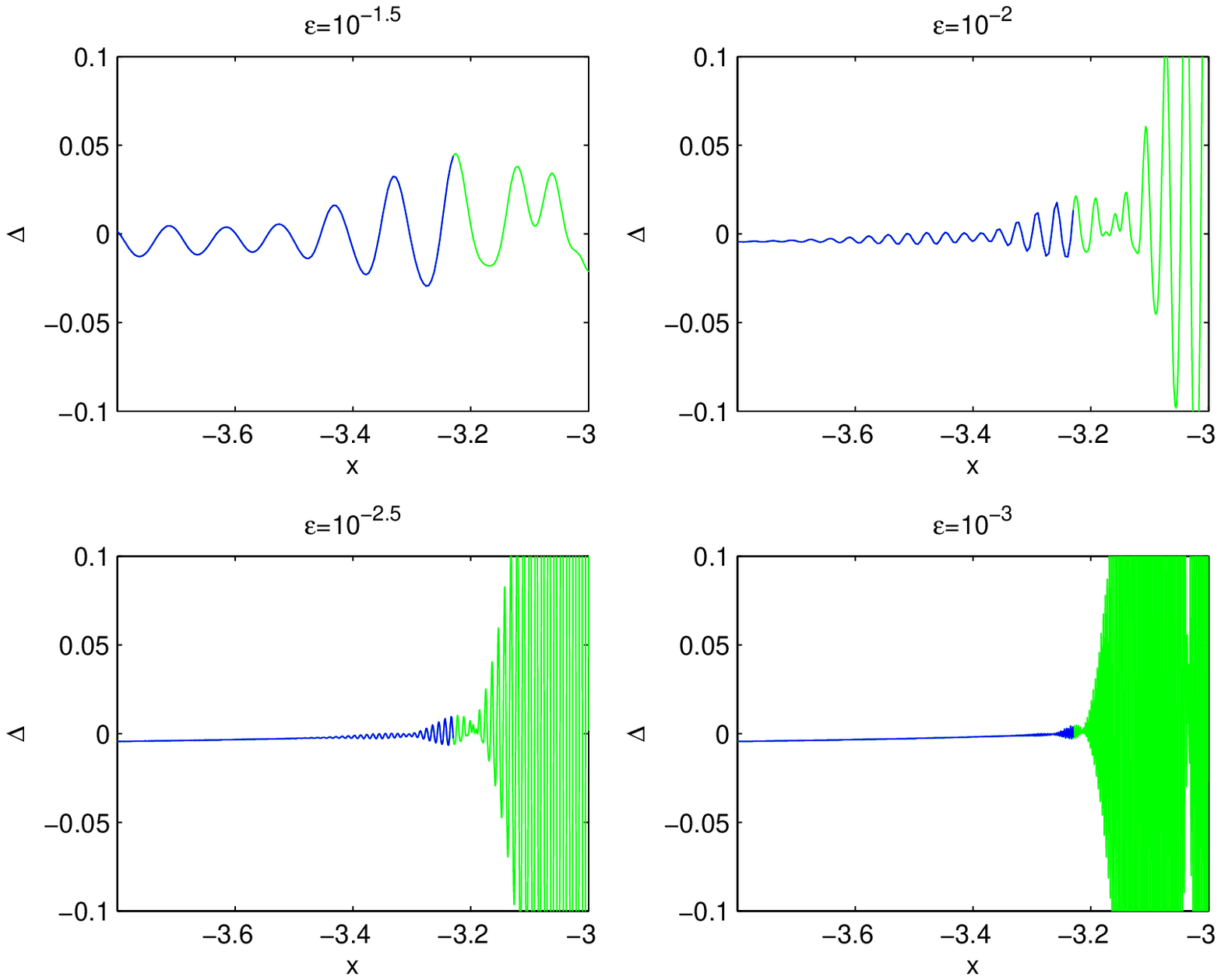, width=1.1\textwidth}
\caption{Difference of the KdV and the multiscale solution in order 
$\epsilon^{1/3}$
for the initial data $u_0(x)=-\mbox{sech}^2x$ and several values of $\epsilon$ 
for $t=0.4$. The curves are plotted in green in the Whitham zone. }
\label{figp2delta4e13}
\end{figure}
The 
maximal difference $\Delta_{max}$ of the KdV solution and the multiscale 
solution near this edge 
decreases roughly as $\epsilon^{2/3}$. More precisely the 
error can be fitted  with a straight line by a standard linear 
regression analysis, 
$-\log_{10}\Delta_{max}=-a\log_{10}\epsilon+b$ with $a=0.63$, 
$b=0.41$. The correlation coefficient is $r=0.999$, the standard 
error is $\sigma_{a}=0.02$.

\paragraph{\it Comparison and matching with the asymptotic solution}
The aim of this paper is to improve the asymptotic description of the 
small dispersion limit of KdV near the leading edge. In 
Fig.~\ref{figp2deltawhith1e413} it can be seen that the multiscale 
solution will indeed be a much better approximation near this 
edge.
\begin{figure}[!htb]
\centering
\epsfig{figure=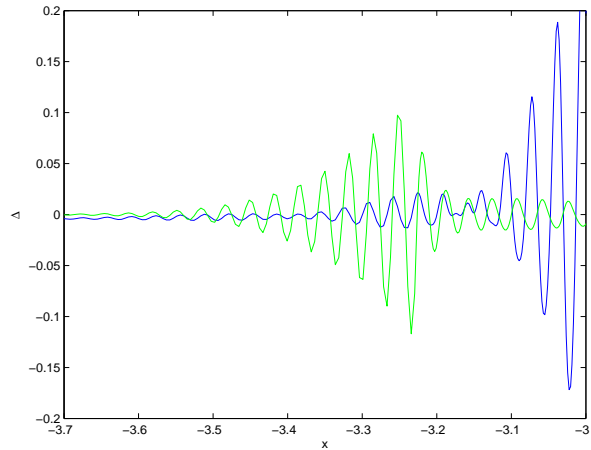, width=0.7\textwidth}
\caption{The difference of the KdV and the multiscale solution 
 (blue) and the difference  of the KdV and the asymptotic solution (green)
for the initial data $u_0(x)=-\mbox{sech}^2x$ and $\epsilon=10^{-2}$ 
for $t=0.4$. }
\label{figp2deltawhith1e413}
\end{figure}
Near the leading edge, the multiscale solution provides a superior 
description of the KdV solution, whereas the elliptic asymptotic solution is 
much better for $x\gg x^{-}(t)$ in the Whitham zone. In fact it is 
possible to identify a zone where the multiscale solution is 
more satisfactory than the asymptotic solution. Due to the strong 
oscillations of the solutions, there is a certain ambiguity in the 
definition of this zone. We define the limits of the zone as the last 
intersection (or where the solutions come closest) on which  the 
other solution has an error with larger oscillations. 
In  this  zone it is 
possible to replace the asymptotic solution by 
the multiscale solution. The result of this patch work approach is 
shown in Fig.~\ref{figp22corr2e13}. 
\begin{figure}[!htb]
\centering
\epsfig{figure=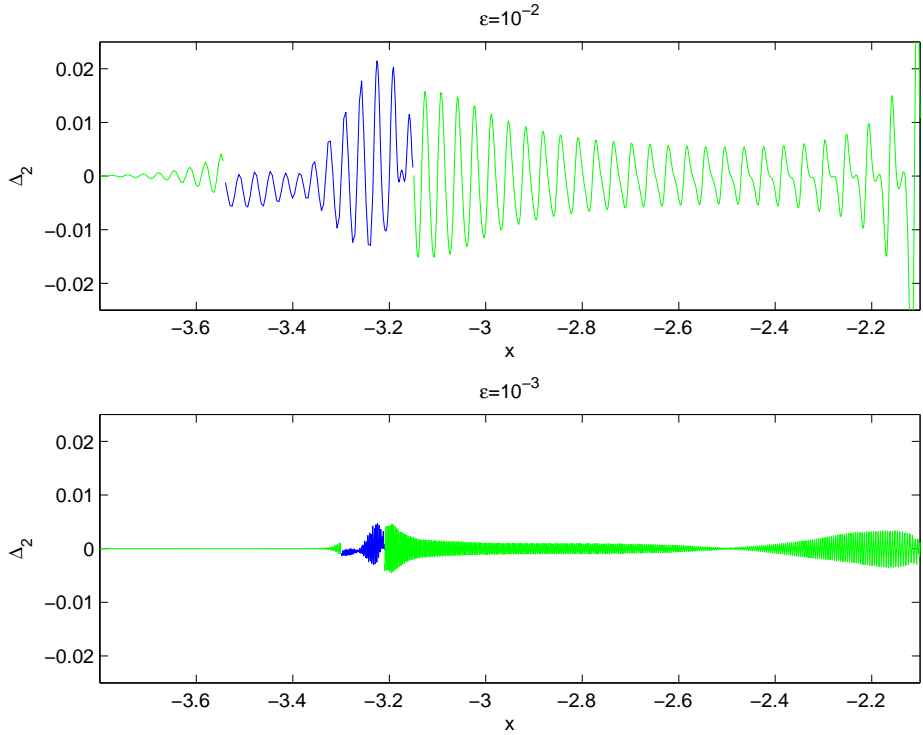, width=1.1\textwidth}
\caption{Difference of the KdV and the multiscale solution (blue) and 
the KdV and the asymptotic solution (green)
for the initial data $u_0(x)=-\mbox{sech}^2x$ 
at $t=0.4$ for two values of $\epsilon$. }
\label{figp22corr2e13}
\end{figure}
It can be seen that the resulting amended asymptotic description has 
an accuracy near the leading edge of the same order as in the 
interior of the Whitham zone. The maximal difference between the KdV 
and the asymptotic solution still occurs near the leading edge.

As already mentioned, the zone where the multiscale solution provides 
a better approximation to the KdV solution than the asymptotic 
solution, shrinks with $\epsilon$ as can be inferred from 
Fig.~\ref{figp2boulr13}. 
\begin{figure}[!htb]
\centering
\epsfig{figure=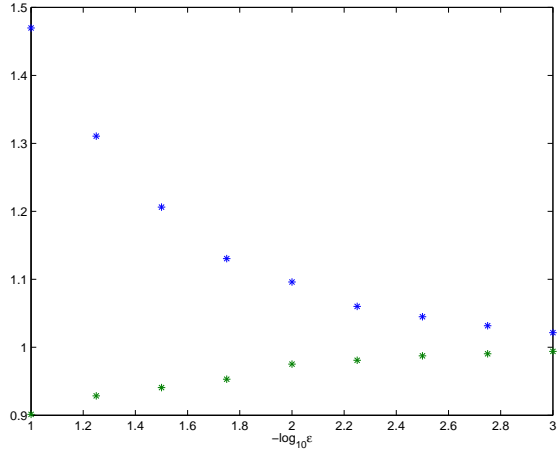, width=0.7\textwidth}
\caption{Boundary values of the zone where the multiscale solution 
provides a better approximation to the KdV solution than the 
asymptotic solution. The $x$-values of the boundary of this zone 
(normalized by $x^{-}$) for $t=0.4$ are shown for several values of $\epsilon$.}
\label{figp2boulr13}
\end{figure}
The width of this zone decreases roughly as $\epsilon^{2/3}$  which 
shows the self consistency of the used rescaling of the spatial 
coordinate near the leading edge. More
precisely, we find a scaling $\epsilon^{a}$ with $a=0.66$,  
correlation coefficient $r=0.9996$ and standard error 
$\sigma_{a}=0.015$.
It can be seen that the zone is not symmetric around the leading 
edge, it extends much further into the Hopf region than in the 
Whitham zone. This is due to the fact that the multiscale solution is 
quickly out of phase with the rapid oscillations in the Whitham zone, 
and that the Hopf solution does not have oscillations.

\paragraph{\it Breakup time}
In \cite{GK} it was shown that the elliptic  asymptotic solution is worst near the 
breakup of the Hopf solution. 
The multiscale expansion obtained in the previous section
 is not defined for times before $t_{c}$, and 
it will be worst there, since it can be understood as an expansion 
around the leading edge of the Whitham zone. At breakup, however, 
leading and trailing edge coincide. Thus the approximation is rather 
crude there, but it increases in quality with time as can be seen in 
Fig.~\ref{figp2break913}. It is, however, interesting to study at 
which times the multiscale solution starts to give a better 
asymptotic description than other approaches. Dubrovin 
conjectured \cite{dubcr} that the asymptotic behavior of the KdV solution 
close to the breakup of the corresponding Hopf solution is given by a 
particular solution to the second equation in the Painlev\'e-I 
hierarchy. In \cite{pain12} we provided strong numerical evidence for 
the validity of this conjecture. The natural question is whether 
Painlev\'e-I2 description near the critical point provides a 
satisfactory asymptotic solution for KdV till
times where the  the multiscale solution studied in the present paper 
provides a valid description near the leading edge. A comparison of 
Fig.~\ref{figp2break913} with a similar figure in \cite{pain12} shows 
that this is qualitatively the case.
\begin{figure}[!htb]
\centering
\epsfig{figure=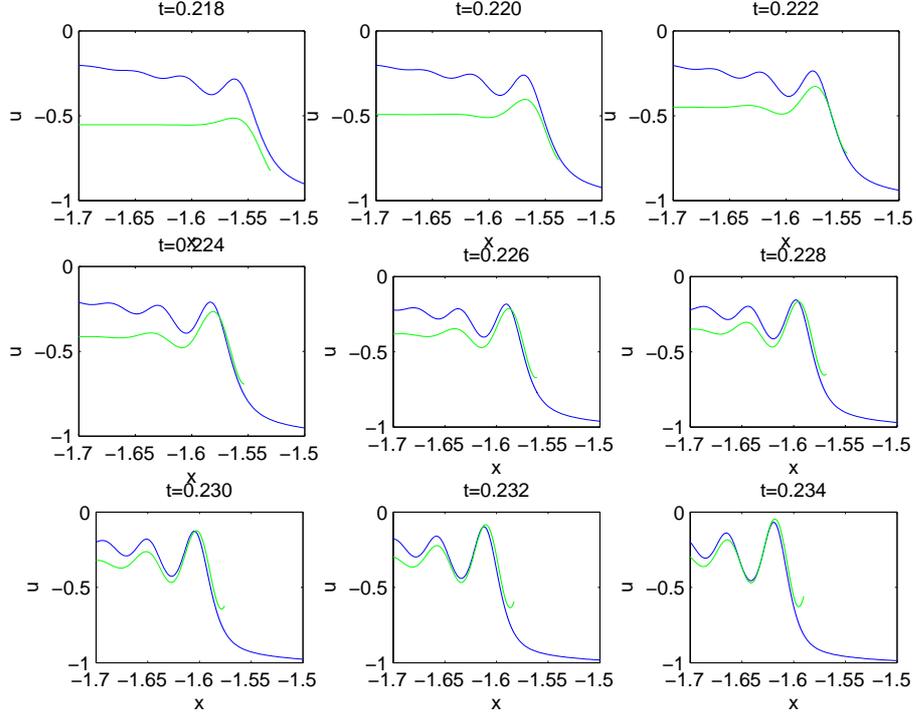, width=1.1\textwidth}
\caption{The blue line is the solution of the KdV equation for the 
initial data $u_0(x)=-\mbox{sech}^2x$ and $\epsilon=10^{-2}$, 
and the green line is the corresponding  multiscale solution 
given by formula  (\ref{p228}).
The plots are given for   different  times near the point of gradient catastrophe 
$(x_c,t_c)$ of  the Hopf solution. Here  $x_c\simeq -1.524 $, $t_c\simeq 0.216$.}
\label{figp2break913}
\end{figure}
In Fig.~\ref{figp2break913} the multiscale solution is shown for $x<x^{+}(t)$. 
Near breakup the  approximation is only acceptable close to the 
breakup point. For larger  times, more and more oscillations are 
satisfactorily reproduced by the multiscale solution.  But the multiscale solution 
will only be a better approximation of the oscillations than the Hopf 
solution for times $t\gg t_{c}$. 

\section{Outlook}
In the present work we have considered a multiscale solution to the 
KdV equation in the small dispersion limit close to the leading edge of 
the oscillatory zone. We studied the solution up to order 
$\epsilon^{1/3}$ and show that the envelope of the oscillations is described by the Hasting-McLeod solution of the Painlev\'e-II.  The validity of the approach in the considered  limit was shown numerically. 
The double scaling expansion of the KdV solution in the small dispersion
limit will be investigated with  the Riemann-Hilbert approach and steepest descent method for 
oscillatory Riemann-Hilbert problem as done in \cite{DVZ}. The Riemann-Hilbert approach seems so far the 
only analytical tool to study the double-scaling expansion to the {\it Cauchy problem} of the KdV equation. 
This project will be the subject of our future research.

As can be seen from Figure~\ref{fig5}, the asymptotic solution of the KdV equation does
not give a satisfactory description of the KdV small dispersion limit also at the trailing edge of the oscillatory zone.  This problem will be investigated in a subsequent publication, too. A similar problem
was tackled in the context of matrix-models in \cite{Claeys}.

\appendix
\section{Numerical solution of the Painlev\'e-II equation}
We are interested in the numerical computation of the Hastings-McLeod 
solution to the Painlev\'e-II 
equation 
\begin{equation}
    P_{II}A:= A_{zz}-zA-2A^{3}=0
    \label{pain1}
\end{equation}
which is subject to the asymptotic conditions \cite{HM}
\begin{equation}
    A\simeq \sqrt{-z/2} \mbox{ for } z\rightarrow -\infty,
    \label{pain2}
\end{equation}
and 
\begin{equation}
    A\simeq \mbox{Ai}(z) \mbox{ for } z\rightarrow \infty,
    \label{pain3}
\end{equation}
where $\mbox{Ai}(z)$ is the Airy function. Numerically we will consider 
equation (\ref{pain1}) on a finite interval $[z_{l},z_{r}]$ 
(typically $[-10,10]$). The asymptotic solution near $\pm \infty$, which will be discussed in 
more detail below, is truncated in a way that the truncation error at 
$z_{l}$, $z_{r}$ is below $10^{-10}$. At these points we 
impose the values following from the asymptotic solutions 
as boundary conditions, namely 
\begin{align}
     A(z_{l})& = \sqrt{-z_{l}/2}-\frac{1}{8\sqrt{2}}(-z_{l})^{-5/2} 
     -\frac{73}{128\sqrt{2}}(-z_{l})^{-11/2}
    \nonumber  \\
    A(z_{r}) & =\frac{1}{2\sqrt{\pi}z_{r}^{1/4}}\exp\left(-\frac{2}{3}
    z_{r}^{3/2}\right) 
    \label{pain4}.
\end{align}

To solve equation (\ref{pain1}) for $z\in [z_{l},z_{r}]$ we use 
spectral methods since they allow for an efficient numerical 
approximation of high accuracy. We map the interval $[z_{l},z_{r}]$ 
with a linear transformation $z\to x$ to the interval $I=[-1,1]$ and expand $A$ 
there in Chebychev polynomials. 

Let us briefly summarize the Chebychev approach, for details see 
e.g.~\cite{canuto,fornberg,cam}. The Chebyshev
polynomials $T_n(x)$ are defined on the interval $I$ by the
relation 
\[
T_n(\cos(t)) = \cos(n t)\;, \mbox{where } x = \cos(t)\;,
\qquad t\in[0,\pi]\;.
\]
A  function $f$ on $I$ is 
approximated via Chebychev polynomials, $f\approx 
\sum_{n=0}^{N}a_{n}T_{n}(x)$ where the spectral coefficients $a_{n}$
are obtained by the conditions $f(x_{l})=\sum_{n=0}^{N}a_{n}T_{n}(x_{l})$, 
$l=0,\ldots,N$. This approach is called a collocation method. If the 
collocation points are chosen to be 
$x_l=\cos(\pi l/N)$, the spectral coefficients follow from $f$ via  a 
Discrete Cosine Transform (DCT) for which fast algorithms exist. We use 
here a DCT within Matlab. A recursive relation for the derivative of 
Chebychev polynomials implies that the action of the 
differential operator $\partial_{x}$ on $f(x)$ leads to an action of a 
matrix $D$ on the vector of the spectral coefficients $a_{n}$. Thus 
we express $A(x)$ in terms of Chebychev polynomials, 
$A(x)=\sum_{n=0}^{N}\tilde{A}_{n}T_{n}(x)$ (we typically work with 
$N=128$), and the coefficients of  $\partial_{x}A$ in 
terms of Chebychev polynomials are determined then via $D\tilde{A}$. 

To solve equation (\ref{pain1}) on the interval $[z_{l},z_{r}]$, we 
use an iterative approach,
\begin{equation}
    A_{n+1,zz}=zA_{n}+2A_{n}^{3},\quad n\in \mathbb{N}
    \label{pain5}.
\end{equation}
We start with $A_{1}(z)=(1+z^{2})^{1/4}/(1+\exp(z))/\sqrt{2}$. In each step of 
the iteration we solve equation (\ref{pain5}) for $A_{n+1}$ with the 
boundary conditions (\ref{pain4}). The boundary 
conditions are imposed with a $\tau$-method: the last two rows of the 
matrix $D^{2}$ for the second derivative are replaced with the boundary conditions at 
$x=\pm1$. Since $T_{n}(\pm1)=(\pm1)^{n}$, the resulting matrix $L$ 
which will be inverted in each step of the iteration, has only $1$ 
and $-1$ in the last two rows and is thus better 
conditioned than the matrix $D^{2}$. 
It turns out that the iteration is unstable if no relaxation is used. 
We thus define $A_{n+1}=\mu L^{-1}(zA_{n}+2A_{n}^{3}) + (1-\mu) A_{n}$ 
with $\mu=0.009$. With this choice of the parameters, the iteration 
converges. It is stopped when the difference between $A_{n+1}$ and 
$A_{n}$ is of the order of machine precision (Matlab works internally 
with a precision of the order of $10^{-16}$; due to rounding errors 
machine precision is typically limited to the order of $10^{-14}$). 
The solution is shown in Fig.~\ref{figpain2}.
\begin{figure}[!htb]
\centering
\epsfig{figure=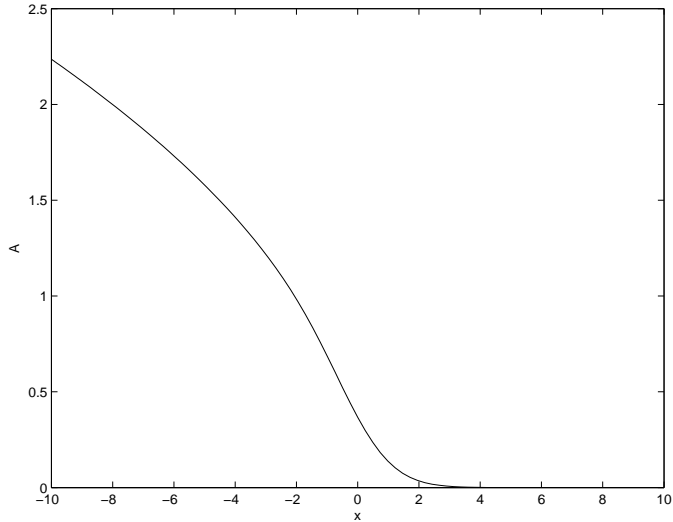, width=0.7\textwidth}
\caption{Hastings-McLeod solution of the Painlev\'e II equation.}
\label{figpain2}
\end{figure}

To test the accuracy of the solution we plot in Fig.~\ref{figp2test} 
the quantity $P_{II}A$ as computed with spectral methods on the 
collocation points. It can be seen that the error is biggest on the 
boundary which is even more obvious from Fig.~\ref{figp2testbou}. The 
found solution is also compared to a numerical solution with a 
standard ode solver as \emph{bvp4c} in Matlab. The solutions agree 
within the limits of numerical precision.
\begin{figure}[!htb]
\centering
\epsfig{figure=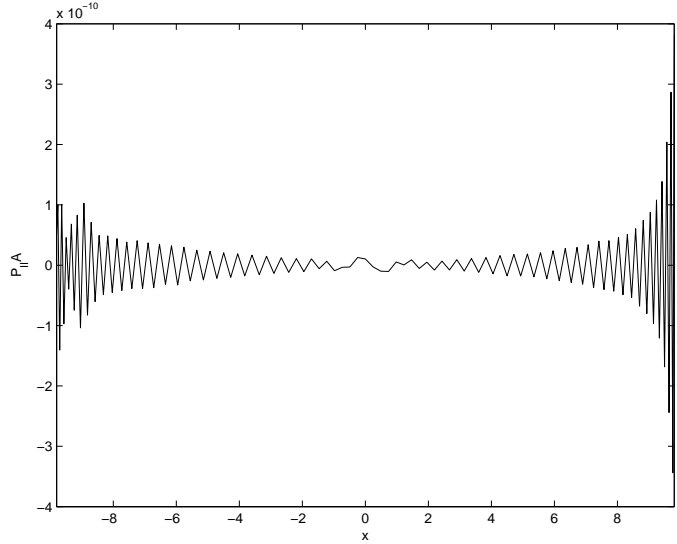, width=0.7\textwidth}
\caption{Accuracy of the solution of the Painlev\'e II equation.}
\label{figp2test}
\end{figure}
\begin{figure}[!htb]
\centering
\epsfig{figure=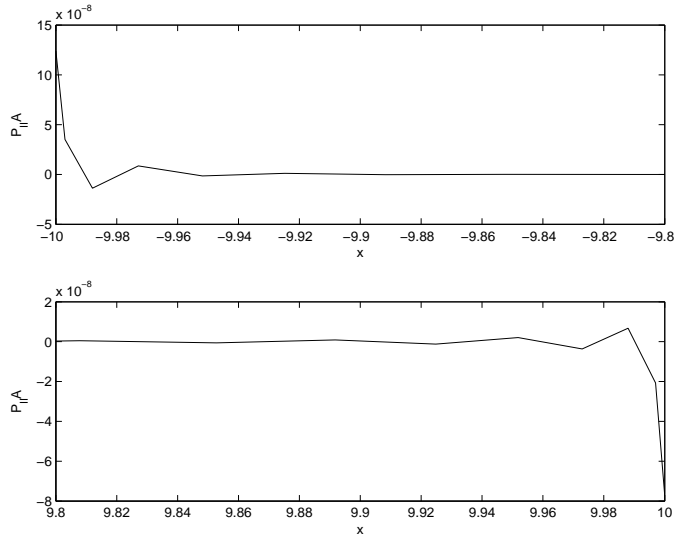, width=0.7\textwidth}
\caption{Accuracy of the solution of the Painlev\'e II equation near 
the boundary of the considered interval.}
\label{figp2testbou}
\end{figure}

For general values of $z$ the solution is obtained as follows: for 
values of $z\in [z_{l},z_{r}]$ they follow from the spectral data via 
$A(z)=\sum_{n=0}^{N}\tilde{A}_{n}T_{n}(z)$. Notice that the accuracy 
of the solution is best on the collocation points, but we can expect 
it to be of the order of at least $10^{-6}$ even at points $z$ in between. 
For values of $z<z_{l}$, we use 
the approximation $A(z)=\sqrt{-z/2}-(-z)^{-5/2}/8/\sqrt{2} -\frac{73}{128\sqrt{2}}(-z_{l})^{-11/2}$, 
for values of 
$z>z_{r}$, we use the approximation $A(z)=\exp\left(-\frac{2}{3}
z^{3/2}\right)/(2\sqrt{\pi}z^{1/4})$. 
This provides a global approximation to the 
solution with an accuracy of the order of $10^{-6}$ and better, which 
is sufficient for our purposes. Higher precision can be reached 
within the used approach without problems: one can either increase 
the values of $-z_{l}$ and $z_{r}$ and use a higher number of 
polynomials, or use higher order terms in the asymptotic solution of 
$A$ for $z\to \pm \infty$.

\bibliographystyle{plain}

\end{document}